\documentclass[11pt]{article}
\usepackage{brainberry}
\usepackage{booktabs}
\makeatletter
\newcommand{\restatednote}{\ifthmt@thisistheone\else, \textcolor{black}{Restated}\fi}
\makeatother
\SetKwIF{If}{ElseIf}{Else}{If}{then}{Else if}{Else}{end if}
\SetKwFor{For}{For}{do}{end for}
\SetKw{Return}{Return}
\SetKw{KwRet}{Return}

\title{\resizebox{\textwidth}{!}{Optimal Simulated Annealing for Partition Function Estimation}}
{\author{Author(s)}}{
\author{Heng Guo\thanks{\fontsize{8.5pt}{10pt}\selectfont School of Informatics, University of Edinburgh, Edinburgh, United Kingdom. Email: \texttt{hguo@inf.ed.ac.uk}}
\and Hongyang Liu\thanks{\fontsize{8.5pt}{10pt}\selectfont School of Computer Science, State Key Laboratory for Novel Software
Technology, New Cornerstone Science Laboratory, Nanjing University,
Nanjing, China. Emails:
\texttt{liuhongyang@smail.nju.edu.cn, yinyt@nju.edu.cn, zhangyiyao@smail.nju.edu.cn}}
\and  Xiongxin Yang\thanks{\fontsize{8.5pt}{10pt}\selectfont Department of Computer Science, University of California, Santa Barbara, United States. Email: \texttt{xiongxinyang@ucsb.edu}}
\and  Yitong Yin\footnotemark[2]
\and  Yiyao Zhang\footnotemark[2]}
}

\begin{document}

\maketitle

\begin{abstract}

  In this note, we give a simple analysis of a non-adaptive simulated annealing algorithm for estimating the partition function of Gibbs distributions.
  This yields the most efficient reduction of this kind so far.
  We also establish lower bounds for both general and non-adaptive algorithms, showing that our algorithm is optimal over a broad range of parameters.

\end{abstract}

\section{Introduction}

Counting and sampling are two fundamental computational tasks. 
In a seminal work, Jerrum, Valiant, and Vazirani~\cite{jerrum_random_1986} showed that, for a broad class of problems, their approximate versions are equivalent up to polynomial time.
Inspired by statistical physics, Gibbs distributions and their associated partition functions provide a particularly expressive framework for studying both tasks.
Such Gibbs families underlie algorithms for approximating the volume of convex bodies~\cite{dyer_random_1989}, the permanent of matrices~\cite{jerrumPolynomialtimeApproximationAlgorithm2004}, the partition function of spin systems~\cite{jerrum_polynomial-time_1993}, and the number of satisfying assignments to a CNF formula~\cite{fengFastSamplingCounting2021}.

Formally, let $\Omega$ be a finite state space and let $H:\Omega\to\mathbb{R}_{\ge0}$ be a Hamiltonian.
Throughout, we assume that $H(\Omega)\subseteq\{0\}\cup[1,\infty)$ and that $H^{-1}(0)$ is nonempty.
For each $\beta\in\mathbb{R} \cup \{-\infty\}$, the \emph{Gibbs distribution} associated with $H$ parameterized by $\beta$ is defined by
\begin{equation*}
    \mu_\beta(X)=\frac{\exp(\beta H(X))}{Z(\beta)},
\qquad
Z(\beta)=\sum_{X\in\Omega}\exp(\beta H(X)),
\end{equation*}
where $Z(\beta)$ is called the \emph{partition function}.
We set $Z(-\infty)=|H^{-1}(0)|$ and take $\mu_{-\infty}$ to be uniform on $H^{-1}(0)$.
The sampling task is to draw a sample from $\mu_\beta$, whereas the
counting task is formulated as estimating the ratio of the partition functions at two prescribed parameters $\beta_{\min}$ and $\beta_{\max}$.

In this paper, we revisit the reduction from sampling to counting. 
We first define the sampling oracle model, which is a standard abstraction for this task.
\begin{definition}[Oracle model]
\label{def:oracle-model}
Let $-\infty\le\beta_{\min}\le\beta_{\max}<+\infty$.
On each query $\beta\in[\beta_{\min},\beta_{\max}]$, the oracle independently draws $X\sim\mu_\beta$ and returns only $H(X)$.
\end{definition}
Note that the oracle does not reveal the sample $X$ itself, but only its Hamiltonian value $H(X)$.
Our goal is to estimate the partition ratio $Q=Z(\beta_{\max})/Z(\beta_{\min})$ using these samples.
This reduction, commonly known as \emph{simulated annealing}, interpolates between the endpoints using intermediate parameters called a \emph{cooling schedule}~\cite{Bezakova08,Stefankovic:JACM09,Huber:Gibbs}.
An algorithm is \emph{non-adaptive} if all query parameters are fixed before any samples are observed, and \emph{adaptive} otherwise.
All queries of a non-adaptive algorithm can therefore be made in a single sampling round.
Our oracle model assumes access to exact samples, whereas applications typically provide only approximate samples.
This assumption entails essentially no loss: for an algorithm making at most $N$ oracle calls, if each call returns a sample within total variation distance $\delta/N$ of the target Gibbs distribution, a coupling argument and a union bound show that replacing exact samples with approximate ones changes the success probability by at most $\delta$.

We measure the efficiency of such an algorithm in terms of sample complexity and the number of sampling rounds. 
The \emph{sample complexity} is the total number of oracle queries.
A \emph{sampling round} consists of oracle queries whose parameters are fixed before the round begins and that can be executed in parallel; 
parameters in later rounds may depend on samples observed in earlier rounds.
Thus, the number of sampling rounds measures how many sequential stages of oracle queries are required.
Unless otherwise specified, $\log$ denotes the natural logarithm.
The input of the algorithm also includes two arbitrary known upper bounds
\[
q \ge \log Q, \qquad h \ge \E[X\sim\mu_{\beta_{\max}}]{H(X)},
\]
and we measure the sample complexity and sampling rounds in terms of these parameters and the desired relative error $\varepsilon$.
Without loss of generality, we assume throughout that $q\ge1$ and $h\ge2$.

Early non-adaptive simulated annealing algorithms~\cite{dyer_random_1989,Bezakova08} use $O(q^2\varepsilon^{-2}\log h)$ samples in a single round.
The first improvement in total sampling work came from adaptive simulated annealing~\cite{Stefankovic:JACM09}, which achieved $O(q\varepsilon^{-2}\operatorname{polylog}(q,h))$ samples.
Later methods based on the Tootsie Pop Algorithm (TPA)~\cite{tpa,TPA:journal} and the paired product estimator~\cite{Huber:Gibbs} culminated in $O(q\varepsilon^{-2}\log h)$ samples~\cite{Kolmogorov:COLT18,harrisParameterEstimationGibbs2024}, but remained highly adaptive.
Recent work~\cite{harrisNearOptimalParallelApproximate2026} obtained $O(q\varepsilon^{-2}\log^2 h)$ samples in one round and $O(q\varepsilon^{-2}\log h)$ samples in three rounds.
Here we obtain the latter sample bound in a single non-adaptive round.

\begin{restatable}[Upper bound\restatednote]{theorem}{upperbound}
    \label{thm:main-result}
    For every $\varepsilon\in(0,1)$, there is a non-adaptive algorithm that estimates $Q$ within relative error $\varepsilon$ with probability at least $3/4$ using $O(q\varepsilon^{-2}\log h)$ oracle calls.
\end{restatable}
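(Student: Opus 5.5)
We build a non-adaptive product estimator over a fixed cooling schedule and control its variance through the relative variances of the one-step ratio estimators. Write $\beta_0=\beta_{\min}<\beta_1<\dots<\beta_\ell=\beta_{\max}$ and $W_i=\exp((\beta_{i+1}-\beta_i)H(X))$ with $X\sim\mu_{\beta_i}$. Then $\E{W_i}=Z(\beta_{i+1})/Z(\beta_i)$ and $\E{W_i^2}/\E{W_i}^2=Z(2\beta_{i+1}-\beta_i)Z(\beta_i)/Z(\beta_{i+1})^2$. Since $\log Z$ is convex, this factor is $\exp(\Delta_i)$, where $\Delta_i$ is a second difference of $f=\log Z$. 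The estimator draws $m_i$ samples at each $\beta_i$, averages them to $\bar W_i$, and outputs $\prod_i\bar W_i$. Chebyshev's inequality then gives relative error $\varepsilon$ with probability $3/4$ provided $\prod_i\bigl(1+(e^{\Delta_i}-1)/m_i\bigr)\le 1+\varepsilon^2/4$. It suffices to have $\sum_i(e^{\Delta_i}-1)/m_i\lesssim\varepsilon^2$.

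The schedule must not depend on $H$, so it is built from $q$ and $h$ alone. It has two parts.
\begin{itemize}
\item \textbf{Far regime.} Far from $\beta_{\max}$, steps grow geometrically. Set $\gamma_j=\beta_{\max}-2^{j}/h$ for $j=0,1,\dots,O(\log(hq))$, truncated at $\beta_{\min}$, plus a step from $-\infty$ if needed. The threshold $H\ge1$ on $\Omega\setminus H^{-1}(0)$ controls the $\beta_{\min}=-\infty$ end.
\item \textbf{Near regime.} Inside each dyadic window $[\gamma_{j+1},\gamma_j]$, place a uniform grid. Since $f'(\beta)=\E[\mu_\beta]{H}$ is nondecreasing and $f'(\beta_{\max})\le h$, we get $f'\le h$ on the whole range. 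Also $\sum_j(\gamma_j-\gamma_{j+1})f'(\gamma_j)$-type sums are bounded by $f(\beta_{\max})-f(\beta_{\min})\le q$ up to a factor 2, using the dyadic structure.
\end{itemize}
The key bound is $\Delta_i\le (\beta_{i+1}-\beta_i)\bigl(f'(2\beta_{i+1}-\beta_i)-f'(\beta_i)\bigr)$, by convexity. Take a uniform grid of $k$ points per window with total sample budget $M$ split as $m_i\propto$ const. On each window, $\sum_i\Delta_i$ telescopes: it is at most $\delta_j\cdot(\text{increase of } f' \text{ across a slightly enlarged window})$, where $\delta_j$ is the window's step size. Care is needed with the overshoot point $2\beta_{i+1}-\beta_i$, which may exceed $\beta_{\max}$. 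We handle it by making the final steps of size $\le 1/h$ and using the crude bound $f'(\beta_{\max}+1/h)\le$ roughly $e\cdot h$-type control. Failing that, we replace the last step with the standard estimator $\exp(-(\beta_{i+1}-\beta_i)H)$ sampled at $\beta_{i+1}$, which has no overshoot.

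To get the optimal count, we choose per-window step sizes so that each $\Delta_i\le 1$. Then $e^{\Delta_i}-1\le 2\Delta_i$, and the condition becomes $\sum_i \Delta_i/m_i\lesssim\varepsilon^2$. With $m_i=m$ uniform we need $m\gtrsim\varepsilon^{-2}\sum_i\Delta_i$, and the total is $\ell m$. The main obstacle is showing that $\ell\cdot\sum_i\Delta_i=O(q\log h)$ without knowing $f$. Here is what I would try first. The window count is $O(\log(hq))$, which is fine once $\log q$ is absorbed by choosing the far windows more coarsely. In window $j$, $k$ equal steps give $\sum\Delta\le (2^{j}/(hk))\,(f'(\gamma_{j-1})-f'(\gamma_{j+1}))$. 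Summing over $j$ with the dyadic/convexity trick bounds this by $O(q/k)$. Taking $k=\Theta(q)$ steps per window gives $\ell=O(q\log h)$ and $\sum\Delta=O(1)$, so $m=O(\varepsilon^{-2})$ suffices. If the uniform allocation loses a $\log$ factor, the fallback is to allocate samples non-uniformly: choose $m_i$ proportional to the a priori per-window budget so that Cauchy–Schwarz gives total cost $\varepsilon^{-2}\bigl(\sum_i\sqrt{\Delta_i}\bigr)^2$. The remaining work is then bounding $\sum_i\sqrt{\Delta_i}\lesssim\sqrt{q\log h}$, which again follows from the dyadic windows and convexity.
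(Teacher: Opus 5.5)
Your core mechanism matches the paper's: a fixed schedule with $\Theta(q)$ steps per dyadic scale of $\beta_{\max}-\beta$, and a telescoping/convexity argument showing that the log second-moment ratios sum to $O(1)$, so $O(\varepsilon^{-2})$ samples per level suffice. Multiplying per-level averages, as you do, rather than averaging full products, as the paper does, is immaterial.

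The gap is the \emph{orientation} of your one-step estimators, which breaks both endpoint steps. You sample at the smaller parameter and reweight upward, so the last factor involves $Z(\beta_{\max}+\delta)$, and $h$ says nothing about it. Your ``crude bound'' $f'(\beta_{\max}+1/h)=O(h)$ is false. Take $\beta_{\min}=-\infty$, $\beta_{\max}=0$, $K$ states with $H=0$ and one state with $H=K$. Then $\mathbb{E}_{\mu_0}[H]<1$ and $\log Q\le\log 2$, so this is a valid input for every $(q,h)$. Yet the last factor's second-moment ratio is $Z(\delta)Z(-\delta)/Z(0)^2\ge Ke^{K\delta}/(K+1)^2$, which is unbounded as $K\to\infty$ for any step $\delta=\delta(q,h)>0$, so your Chebyshev step fails.

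The $\beta_{\min}$ end fails in the same orientation. Samples from $\mu_{-\infty}$ always return $H=0$ and carry no information about $Z(\beta_1)/Z(-\infty)$. For finite $\beta_{\min}$ far below your last window point $\gamma$, the jump sampled at $\beta_{\min}$ has log second-moment ratio $f(2\gamma-\beta_{\min})+f(\beta_{\min})-2f(\gamma)$. This can be of order $q$ or unbounded, since $2\gamma-\beta_{\min}$ may reach or pass $\beta_{\max}$. Your remark that $H\ge1$ off $H^{-1}(0)$ controls this end is the right idea, but it controls only the reversed estimator: $e^{-(\gamma-\beta_{\min})H}$ (or $\mathbf{1}[H=0]$) sampled at $\gamma$, where $f'(\gamma)\le 1/2$ forces $\mu_\gamma(H=0)\ge 1/2$.

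Your fallback for the last step is this same reversal applied to one factor. The clean fix, which is what the paper does, is to reverse \emph{every} factor: estimate $Q^{-1}$ with weights $e^{-(\beta_{i+1}-\beta_i)H}\in(0,1]$ sampled at the larger parameter of each pair, and output the reciprocal. Then nothing is ever evaluated beyond $\beta_{\max}$. Index downward from $\beta_0=\beta_{\max}$ with $d_i=\beta_i-\beta_{i+1}$ and $w_i=f(\beta_i)-f(\beta_{i+1})$. The jump to $\beta_{\min}$ then costs at most $w_{\ell-1}\le 1$ by monotonicity of $f$. Because $d_i\le d_{i+1}$, the undershoot $[\beta_{i+1}-d_i,\beta_{i+1}]$ stays inside the next interval, and convexity bounds each log ratio by $w_i-\frac{d_i}{d_{i+1}}w_{i+1}$, whose sum is $O(1)$. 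Reversing everything also avoids mixing a product with reciprocals in your Chebyshev step.

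A secondary point: $O(\log(hq))$ windows with $\Theta(q)$ steps each costs $O(q\log(qh))$. That exceeds $O(q\log h)$ when $\log q\gg\log h$, and your proposed remedy is aimed at the wrong end. The surplus $q\log_2 q$ steps sit in the windows within distance about $q/h$ of $\beta_{\max}$, where your step $2^j/(hq)$ is smaller than $1/h$. Using step $\max\{1/h,2^j/(hq)\}$ there removes them; this is the paper's first phase of $q$ steps of size $1/h$, justified by $f'\le h$. Coarsening the far windows instead would break your $O(q/k)$ bound. A valid instance can make $f'$ jump by about $qh/2^j$ within a single step of a far window $j$, so that one $\Delta_i$ is about $q/k_j$.
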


\pref{tab:prior-results} compares prior work and our result in terms of sample complexity and the number of sampling rounds. 
Here $\widetilde O(\cdot)$ suppresses polylogarithmic factors.

\begin{table}[ht]
\centering
\small
\caption{Comparison of Sample Bounds and Parallelism}
\label{tab:prior-results}
\setlength{\tabcolsep}{4pt}
\renewcommand{\arraystretch}{1.08}
\begin{tabular}{ccc}
\toprule
Result & Samples & Rounds of sampling \\
\midrule
\cite{dyer_random_1989,Bezakova08} & $O(q^2\varepsilon^{-2}\log h)$ & $1$ \\
\cite{Stefankovic:JACM09} & $O(q\varepsilon^{-2}\operatorname{polylog}(q,h))$ & $\widetilde O(\sqrt q)$ \\
\cite{Huber:Gibbs,Kolmogorov:COLT18,harrisParameterEstimationGibbs2024} & $O(q\varepsilon^{-2}\log h)$ & $\widetilde O(q)$ \\
\cite{harrisNearOptimalParallelApproximate2026} & $O(q\varepsilon^{-2}\log^2 h)$ & $1$ \\
\cite{harrisNearOptimalParallelApproximate2026} & $O(q\varepsilon^{-2}\log h)$ & $3$ \\
This note & $O(q\varepsilon^{-2}\log h)$ & $1$ \\
\bottomrule
\end{tabular}
\end{table}

Combined with the parallel samplers of~\cite{liuParallelizeSingleSiteDynamics2024} and~\cite{chen2025efficient}, 
our reduction also gives $\RNC$ approximate counting algorithms for the corresponding models.
See~\cite{harrisNearOptimalParallelApproximate2026} for more details.

On the lower-bound side, Kolmogorov~\cite{Kolmogorov:COLT18} proved a lower bound of $\Omega(q\varepsilon^{-2})$ oracle calls for partition-ratio estimation under our oracle model.
We obtain two stronger lower bounds: one for general algorithms, allowing arbitrary adaptivity, and one for non-adaptive algorithms.
\begin{restatable}[General lower bound\restatednote]{theorem}{generallowerbound}
    \label{thm:general-lower-bound}
    Fix $q\geq256$, $h\geq\e^{256}$, and $\varepsilon\in(0,1/4)$.
    Suppose an algorithm estimates $Q$ within relative error $\varepsilon$ with probability at least $3/4$ on every input with parameters $q$ and $h$.
    If it uses at most $N$ queries per execution, then $N=\Omega(\min\{q\log h,q^2\}\cdot\varepsilon^{-2})$.
\end{restatable}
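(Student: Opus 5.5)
We reduce to a hypothesis-testing lower bound. We build a family of pairs of Hamiltonians $(H_0,H_1)$, both compatible with the parameters $q$ and $h$, whose partition ratios differ by a factor of at least $(1+4\varepsilon)$. An $\varepsilon$-accurate estimator that succeeds with probability $3/4$ therefore distinguishes $H_0$ from $H_1$ with probability $3/4$. By the standard Le Cam / KL argument applied to the adaptive transcript (chain rule for KL divergence), any such algorithm must satisfy
\[
\sum_{\text{queries}} \mathrm{KL}\bigl(\mu^{(0)}_{\beta}\circ H_0^{-1}\,\big\|\,\mu^{(1)}_{\beta}\circ H_1^{-1}\bigr) \ge c .
\]
Here each query may be adaptive, but its per-query divergence is bounded by $\sup_\beta \mathrm{KL}_\beta$. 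So it suffices to make $\sup_{\beta\in[\beta_{\min},\beta_{\max}]}\mathrm{KL}_\beta = O(\varepsilon^2/\min\{q\log h,q^2\})$. Equivalently, the information about $Q$ is spread uniformly across the temperature range, and no single $\beta$ reveals much.

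\emph{Construction.} We take the Hamiltonian values to lie on a geometric grid of $m\approx\min\{\log h, q\}$ levels $1=E_1<E_2<\dots<E_m$, with $E_{j+1}/E_j$ a constant (say $2$), plus the ground level $0$. We set $\beta_{\min}=-\infty$ and choose $\beta_{\max}$ and the multiplicities of each level so that the following hold.
\begin{itemize}
\item[(a)] $\log Q\le q$ and $\E{H}\le h$ at $\beta_{\max}$.
\item[(b)] As $\beta$ sweeps the range, the Gibbs mass concentrates near level $j$ for $\beta$ in a window $I_j$, and the windows $I_1,\dots,I_m$ partition the range.
\item[(c)] Each level contributes about $q/m$ to $\log Q$, so that $\log Z$ grows by $\approx q/m$ across each window.
\end{itemize}
The perturbation $H_1$ multiplies the multiplicity of each level $j$ by $e^{\sigma_j\delta}$, with signs or magnitudes chosen so that the total change of $\log Q$ is $\sum_j\delta_j\approx \varepsilon$. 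Taking $\delta_j=\varepsilon/m$ for all $j$ gives $\log Q$ shifted by $\varepsilon$.

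\emph{Per-query divergence.} At a fixed $\beta$, the observed law is a distribution over levels. Reweighting by the factors $e^{\delta_j}$ changes it by a $\chi^2$ amount of order $\mathrm{Var}_{\mu_\beta}(\delta_{J})$. With uniform $\delta_j$ the variance is zero, which is too good and shows that uniform multiplicity scaling is undetectable but also leaves $Q$'s ratio... This forces a more careful setup. We instead put all the perturbation into the ground level and then need the ground-state mass to be tiny at most $\beta$, which controls the detection cost.

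The cleaner route is the one Kolmogorov's $\Omega(q\varepsilon^{-2})$ bound suggests, refined by the multi-scale structure. In each window $I_j$ the observation looks like a two-level system with effective length $q/m$. Kolmogorov's single-scale argument shows that estimating the $\log Z$ increment across a window of size $\ell=q/m$ to additive accuracy $\eta$ costs $\Omega(\ell\eta^{-2})$ samples spent in that window. Because the windows carry independent perturbations $\delta_j=\pm\eta$ with random signs, we use an Assouad/Fano-type argument rather than a two-point one. Total error $\varepsilon$ then requires about $\sqrt{m}\,\eta\approx\varepsilon$, i.e.\ $\eta\approx\varepsilon/\sqrt m$.

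\emph{Counting samples.} Summing over the windows gives
\[
N\ge\sum_j \Omega(\ell\eta^{-2}) = m\cdot\Omega\bigl((q/m)\cdot m\varepsilon^{-2}\bigr) = \Omega(qm\varepsilon^{-2}),
\]
which is the claimed bound $\Omega(\min\{q\log h,q^2\}\varepsilon^{-2})$. A sample at $\beta\in I_j$ gives almost no information about other windows, because the Gibbs mass outside level $j$'s neighborhood is exponentially small; we bound this leakage by a geometric series using the factor-$2$ level spacing.

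\emph{Main obstacle.} The hardest step is making the windows nearly information-orthogonal while keeping $\log Q\le q$ and $\E{H}\le h$; the constants $256$ in the hypotheses give the room for this. I would first try Gaussian-like level multiplicities, so that the level law at each $\beta$ is a discretized log-concave profile with $O(1)$ width. I would then verify that the per-query KL is at most $O(\sum_j w_j(\beta)\eta^2/\ell)$, where $w_j(\beta)$ is the mass near level $j$.
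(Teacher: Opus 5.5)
There is a genuine gap. It is in the construction, not in the information-theoretic bookkeeping. Your key step is: ``in each window $I_j$ the observation looks like a two-level system with effective length $q/m$, so Kolmogorov's argument gives $\Omega(\ell\eta^{-2})$.'' This is false. In a two-level (or $O(1)$-width) window, the $\log Z$ increment across $I_j$ is governed by a multiplicity ratio such as $\log(n_{j+1}/n_j)$. An algorithm learns that ratio to accuracy $\eta$ with $O(\eta^{-2})$ samples at the crossover temperature. The length $\ell$ is spent on a plateau where the oracle output is essentially deterministic.

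Quantitatively, perturbing log-multiplicities by $g$ changes $\log Q$, to first order, by $\int\operatorname{Cov}_\beta(g,H)\,d\beta\le\sup_\beta\sqrt{\operatorname{Var}_\beta g}\cdot\int\sqrt{\operatorname{Var}_\beta H}\,d\beta$. Meanwhile the per-query KL is about $\tfrac12\operatorname{Var}_\beta g$. Now suppose $H$ takes only $m+1$ values, and let $p_\beta$ be the law of $H$. Each $p_\beta(E)$ is log-concave in $\beta$, since its log-derivative $E-\mathbb{E}_\beta H$ is nonincreasing. Since $\operatorname{Var}_\beta H$ is the Fisher information of this family, $\int\sqrt{\operatorname{Var}_\beta H}\,d\beta=2\int\|\partial_\beta\sqrt{p_\beta}\|_2\,d\beta\le4(m+1)$. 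Hence no two-point or Assouad argument on your geometric grid with $m\approx\min\{q,\log h\}$ levels can certify more than $O(m^2\varepsilon^{-2})$ queries. This holds for any multiplicities, Gaussian-like or not, and for aligned or random signs. That falls short of $q\min\{q,\log h\}\varepsilon^{-2}$ by a factor $q/\log h$, exactly in the regime $\log h\ll q$ where the theorem matters.

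The missing idea is that each window must carry $\Theta(q/m)$ independent degrees of freedom. Equivalently, the law of $H$ must have relative spread about $\sqrt{m/q}$ rather than $O(1)$. The paper does this as follows:
\begin{itemize}
\item It sets $H=1+\frac1r\sum_{i=1}^r y_i$, with $r\approx q/(4a)$ independent truncated geometric coordinates and $a\approx\min\{q,\log h\}/16$.
\item It perturbs multiplicities by $1+tf$, where $f=\sum_i\sum_{\ell\le y_i}1/\ell$.
\item The mean of $f$ rises by more than $ra/2$ from $\beta_{\min}$ to $\beta_{\max}$, while $\operatorname{Var}_\beta f<4r$ uniformly in $\beta$.
\item With $t\approx16\varepsilon/(ra)$, a single two-point Le Cam argument applies, using $\mathcal{D}_{\mathrm{KL}}\le t^2\operatorname{Var}_\beta f\le1024\varepsilon^2/(ra^2)$ at every queried $\beta$. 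This gives $N=\Omega(ra^2\varepsilon^{-2})$.
\end{itemize}
With such an $f$ there is no need for random signs or window orthogonality. Note also that Assouad's lemma, as you invoke it, bounds Hamming loss, not the error in estimating $\sum_j\sigma_j\eta$. That step would need a separate fuzzy-hypotheses argument even if your per-window bound were true.
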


For non-adaptive algorithms, we obtain the following stronger bound.
\begin{restatable}[Non-adaptive lower bound\restatednote]{theorem}{nonadaptivelowerbound}
    \label{thm:non-adaptive-lower-bound}
    Fix $q\geq256$, $h\geq\e^{256}$, and $\varepsilon\in(0,1/4)$.
    Suppose a non-adaptive algorithm estimates $Q$ within relative error $\varepsilon$ with probability at least $3/4$ on every input with parameters $q$ and $h$.
    If it uses at most $N$ queries per execution, then $N=\Omega(\min\{q\log h,\e^q\}\cdot\varepsilon^{-2})$.
\end{restatable}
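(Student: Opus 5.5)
The plan is a Le Cam two-point argument combined with a counting argument that only works because the query schedule is non-adaptive. I would build a family of $K=\Theta(\min\{\log h,\e^q\})$ hard instance pairs $(I_k,I_k')$, $k=1,\dots,K$, with the following three properties.
\begin{enumerate}
\item Every instance has parameters $q$ and $h$, i.e.\ $\log Q\le q$ and $\E[X\sim\mu_{\beta_{\max}}]{H(X)}\le h$.
\item The partition ratios of $I_k$ and $I_k'$ differ by a factor larger than $(1+\varepsilon)^2$, so no single estimate can be $\varepsilon$-accurate for both.
\item The oracle answer distributions of $I_k$ and $I_k'$ differ essentially only on a $\beta$-window $W_k\subseteq[\beta_{\min},\beta_{\max}]$, and the windows $W_1,\dots,W_K$ are pairwise disjoint.
\end{enumerate}
A non-adaptive algorithm commits to its multiset of query parameters before seeing any samples (after conditioning on its internal randomness, or averaging over it). So let $N_k$ be the (expected) number of queries it places in $W_k$. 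Le Cam's inequality with the chain rule for KL divergence gives
\[
1\lesssim \mathrm{KL}\bigl(P_{I_k}\,\|\,P_{I_k'}\bigr)\le \sum_{\text{queries }\beta}\mathrm{KL}_\beta\bigl(I_k\,\|\,I_k'\bigr).
\]
If each query inside $W_k$ contributes $O(\varepsilon^2/q)$, and queries outside $W_k$ contribute a negligible total, then $N_k=\Omega(q\varepsilon^{-2})$ for every $k$. Summing over the disjoint windows gives $N\ge\sum_k N_k=\Omega(Kq\varepsilon^{-2})$, which is the claimed bound.

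For the construction I would use energy scales $E_k=2^k$. Instance $I_k$ puts one ground state at $H=0$ together with a block of states at energies around $E_k$. The block is split into about $q$ sub-levels, so that $\log Q$ accumulates in roughly $q$ increments of size $O(1)$; this is how the factor $q$ of Kolmogorov's $\Omega(q\varepsilon^{-2})$ bound is localized to one scale.

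For $I_k'$ I would perturb each sub-level's multiplicity by a factor $1+c\varepsilon/\sqrt q$, with random independent signs or a coherent shift. The total effect on $\log Q$ is then $\Theta(\varepsilon)$, while a single sample at a given $\beta$ has $\chi^2$-distance only $O(\varepsilon^2/q)$. The sub-levels at scale $E_k$ are only "active", meaning comparable in Gibbs weight to the other states, for $\beta$ in a window of width $O(q/E_k)$ placed at location $\beta_k$. I would choose the locations so that the windows are disjoint; for example, take the multiplicities so that $\beta_k\approx -c'k$ and the windows are separated on a logarithmic scale.

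The two caps on $K$ come from the parameter constraints.
\begin{itemize}
\item Keeping the mean energy at $\beta_{\max}$ below $h$ forces $E_k\le h$, so $K\lesssim\log h$.
\item Keeping $\log Q\le q$ while packing disjoint windows limits the number of distinguishable scales to about $\e^q$. The multiplicities needed to place the windows are bounded by $\e^{q}$, so the available offsets, which are essentially $\log$ of the multiplicities measured against the scales, run out after $\e^q$ scales.
\end{itemize}
I would verify both caps by direct computation of $Z_k(\beta)$.

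The main obstacle is the leakage estimate, that is, showing that queries outside $W_k$ carry almost no information about which of $I_k,I_k'$ is present. Outside the window, the answer distribution has negligible mass on the perturbed sub-levels: below the window the ground state dominates, and above it the top sub-level dominates. I would first try to bound the per-query KL contribution by the Gibbs mass of the perturbed sub-levels times $\varepsilon^2/q$. This mass should decay geometrically in the distance from $W_k$, measured in units of $1/E_k$. Since queries can be arbitrarily many, I must also ensure that the total leakage is controlled by $N$ times a tiny quantity. If that fails, I would fall back on a weaker counting: charge every query to the window nearest to it, and argue that each query carries $O(\varepsilon^2/q)$ information about only $O(1)$ of the $K$ pairs, which still yields $N=\Omega(Kq\varepsilon^{-2})$.
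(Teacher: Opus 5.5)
There is a genuine gap, and your own accounting exposes it. With $K=\Theta(\min\{\log h,\e^q\})$ you conclude $N=\Omega(Kq\varepsilon^{-2})=\Omega(\min\{q\log h,q\e^q\}\varepsilon^{-2})$. That is not the stated bound, and it is false once $\log h\gg\e^q/q$. Indeed, querying only at $\beta_{\max}$ and averaging $\e^{(\beta_{\min}-\beta_{\max})H(X)}$, whose relative variance is at most $Q\le\e^q$, succeeds with $O(\e^q\varepsilon^{-2})$ queries. So no family can have your three properties together with $O(\varepsilon^2/q)$ in-window information and negligible out-of-window leakage.

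The place where this breaks is $\beta_{\max}$. Take $\beta_{\min}=-\infty$, as your ground-state picture suggests. Then the oracle at $\beta_{\max}$ returns $H=0$ with probability exactly $1/Q$, and $Q$ differs by a factor $(1+\varepsilon)/(1-\varepsilon)$ between $I_k$ and $I_k'$. Hence a single query at $\beta_{\max}$ carries $\Omega(\varepsilon^2\e^{-q})$ nats about \emph{every} pair simultaneously. Neither your leakage estimate nor your fallback (each query informs only $O(1)$ pairs) can hold near $\beta_{\max}$. The leakage must be summed over all pairs, which limits any such counting argument to $\min\{Kq,\e^q\}\varepsilon^{-2}$. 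That summed leakage, not a limit on packing windows, is where the $\e^q$ in the theorem comes from.

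The per-pair block is also miscalibrated. Multiplying each sub-level by $1\pm c\varepsilon/\sqrt q$, coherently or with random signs, moves $\log Q$ by only $O(\varepsilon/\sqrt q)$, not $\Theta(\varepsilon)$. A $\Theta(\varepsilon)$ shift at per-sample cost $O(\varepsilon^2/q)$ needs a graded tilt. An example is $q$ independent two-state coordinates of energy $E_k$ with fugacity $\lambda=\Theta(1)$ changed to $\lambda(1+\varepsilon/q)$. But there each coordinate carries only $O(1)$ of $\log Q$, so the block is not saturated at $\beta_{\max}=0$. Its information per query stays $\Theta(\varepsilon^2/q)$ on all of $[-1/E_k,0]$, so every window contains $\beta_{\max}$. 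Then $O(q\varepsilon^{-2})$ queries at $\beta_{\max}$ serve all $K$ pairs, and the argument collapses to Kolmogorov's $\Omega(q\varepsilon^{-2})$.

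The paper avoids this by not localizing a factor $q$ in each pair. It uses $M=\Theta(q\log h)$ pairs. Pair $j$ perturbs by $1+\Theta(\varepsilon)$ only the multiplicity of the single level $E_j=(1+1/q)^{j-1}$, which carries $\log r\approx q$ of $\log Q$ and so dominates at $\beta_{\max}$. Via the affine likelihood ratio, a query at $\beta$ gives pair $j$ at most $O(\varepsilon^2)\exp(-|\beta E_j+\log r|)$. Each such window has width $O(1/q)$ in $\log|\beta|$, and the levels are spaced by ratio $1+1/q$. So only $O(1)$ pairs are active at any $\beta<0$, plus $O(\varepsilon^2/r)$ per pair near $\beta_{\max}$. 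Each query therefore carries $O(\varepsilon^2(1+M/r))$ nats summed over all pairs, and $N=\Omega(\min\{M,r\}\varepsilon^{-2})$ follows. The missing idea is bounded overlap summed over pairs, with the $\beta_{\max}$ leakage priced in, rather than disjoint windows.
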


These lower bounds are further refinements of the lower bounds in \cite{Kolmogorov:COLT18} and \cite{Stefankovic:JACM09}, respectively.
They show that our non-adaptive algorithm is optimal when $\log h\leq q$, and optimal among non-adaptive algorithms when $\log h\leq\e^q/q$.
The $\min$ in both \pref{thm:general-lower-bound} and \pref{thm:non-adaptive-lower-bound} are necessary.
When adaptivity is allowed, the TPA algorithm may only need $O(q^2\varepsilon^{-2})$ queries. 
See \cite[Section 2.2, Algorithm 3]{Kolmogorov:COLT18} and the discussion therein.
For the non-adaptive case, 
we may simply query $H(X)$ at $\beta=\beta_{\max}$ and take the empirical mean of $\e^{(\beta_{\min}-\beta_{\max})H(X)}$ to get an $\varepsilon$-approximation of $1/Q$.
The number of queries needed in this case is $O(\e^q\varepsilon^{-2})$.

In most natural applications, $q$ and $h$ are polynomially related.
For example, in the hard-core model on an $n$-vertex graph~\cite[Section~5.1]{harrisNearOptimalParallelApproximate2026}, each independent set $I\in\mathcal I$ has probability proportional to $\lambda^{|I|}$, where $\mathcal I$ is the set of independent sets and $\lambda>0$ is the fugacity.
Let $H(I)=|I|$, $\beta_{\min}=-\infty$, and $\beta_{\max}=\log\lambda$.
Hence we have $Z(\beta_{\min})=1$ and $Q=\sum_{I\in\mathcal I}\lambda^{|I|}\leq(1+\lambda)^n$.
We take $q=n\log(1+\lambda)$ and $h=n$ so that $\log h\leq q$ holds for fixed $\lambda>0$ and sufficiently large $n$.

\section{Proof of the Upper Bound}

In this section, we prove \pref{thm:main-result} by simply combining a non-adaptive cooling schedule introduced in~\cite{harrisNearOptimalParallelApproximate2026} with the classic product estimator.
\upperbound*



Our key idea, inspired by the $k$-SAT counting work of Feng, Guo, Yin, and Zhang~\cite{fengFastSamplingCounting2021}, is to bound the variance of the whole telescoping product directly.
Applied to a non-adaptive schedule of length $O(q\log h)$, this argument gives an $\varepsilon$-relative approximation using $O(q\varepsilon^{-2}\log h)$ independent Gibbs samples, with all query points fixed in advance.
Because the estimator naturally targets $Q^{-1}$, we first construct and analyze the schedule, then bound the estimator's variance, and finally take its reciprocal.

The schedule is the parameter-$1$ special case of the static schedule in~\cite[Algorithm 2]{harrisNearOptimalParallelApproximate2026}.
To keep the changes in $z$ between consecutive temperatures uniformly bounded, it uses two phases: fixed steps of size $1/h$ near $\beta_{\max}$, followed by geometrically growing steps.
It appends $\beta_{\min}$ and terminates when the next step reaches or passes $\beta_{\min}$ or when $d_i>2$.

\begingroup
\SetAlgoSkip{smallskip}
\begin{algorithm}[ht]
\caption{Construction of a non-adaptive cooling schedule~\cite[Algorithm 2]{harrisNearOptimalParallelApproximate2026}}
\label{alg:non-adaptive-cooling-schedule}
\KwIn{$\beta_{\min}$, $\beta_{\max}$, $q$ and $h$.}
\KwOut{A cooling schedule $B=(\beta_0,\ldots,\beta_\ell)$.}
Set $\beta_0 \gets \beta_{\max}$\;
\For{$i=0$ \textnormal{to} $+\infty$}{
  Set $d_i \gets \max\{1/h,(\beta_{\max}-\beta_i)/q\}$\;
  \If{$d_i>2$ \textnormal{or} $\beta_i-d_i\leq\beta_{\min}$}{
    \Return{$B=(\beta_0,\ldots,\beta_i,\beta_{\min})$}\;
  }
  Set $\beta_{i+1} \gets \beta_i-d_i$\;
}
\end{algorithm}
\endgroup

We call a step, or its corresponding interval, \emph{full} if it is generated by setting $\beta_{i+1}=\beta_i-d_i$ rather than by appending the endpoint $\beta_{\min}$.
Note that only the last interval may be non-full.

\begin{lemma}
    \label{lem:non-adaptive-cooling-schedule-length}
    The cooling schedule $(\beta_0,\ldots,\beta_\ell)$ returned by \pref{alg:non-adaptive-cooling-schedule} has length $\ell = O(q \log h)$.
\end{lemma}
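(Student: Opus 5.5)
We track the distance $t_i=\beta_{\max}-\beta_i$ from the starting point. We split the steps of \pref{alg:non-adaptive-cooling-schedule} into two phases according to which term attains the maximum in $d_i=\max\{1/h,t_i/q\}$. We count the full steps in each phase separately. The final appended endpoint $\beta_{\min}$ adds at most one more interval.

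\textbf{Phase 1 (constant steps).} While $t_i\le q/h$, we have $d_i=1/h$, so $t_{i+1}=t_i+1/h$ and $t_i=i/h$. Hence the first phase ends after at most $q+1$ steps, once $t_i$ exceeds $q/h$. This phase contributes $O(q)$ steps.

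\textbf{Phase 2 (geometric steps).} Once $t_i> q/h$, we have $d_i=t_i/q$, so $t_{i+1}=t_i(1+1/q)$. The distance therefore grows geometrically from a value of at least $q/h$. The algorithm halts as soon as $d_i>2$, that is, once $t_i>2q$. Until then every step has $t_i\le 2q$. Hence the number $m$ of full steps in this phase satisfies
\[
\frac{q}{h}\Bigl(1+\frac1q\Bigr)^{m-1}\le 2q .
\]
This gives $(m-1)\log(1+1/q)\le\log(2h)$. Using $\log(1+1/q)\ge 1/(2q)$ for $q\ge1$, we get $m\le 1+2q\log(2h)=O(q\log h)$, where the last step uses $h\ge2$. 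The termination test on $\beta_{\min}$ can only stop the loop earlier, so it does not affect these upper bounds.

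Adding the two phases and the final appended endpoint gives $\ell\le (q+1)+1+2q\log(2h)+1=O(q\log h)$. The only point needing care is the halting condition $d_i>2$. It caps the total distance at $O(q)$ even when $\beta_{\min}=-\infty$, and this is what makes the geometric phase finite. With that in hand, the remaining estimates are routine.
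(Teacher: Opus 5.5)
Your proof is correct and follows essentially the same two-phase argument as the paper: at most $q+1$ constant steps of size $1/h$, then geometric growth of $\beta_{\max}-\beta_i$ by the factor $1+1/q$ from above $q/h$ until it exceeds $2q$, which gives $O(\log_{1+1/q}(2h))=O(q\log h)$ steps. The only difference is that you spell out two details the paper leaves implicit: the bound $\log(1+1/q)\ge 1/(2q)$ and the extra interval from appending $\beta_{\min}$.
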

\begin{proof}
    While $\beta_0-\beta_i\le q/h$, the procedure takes steps of size $1/h$, so this phase has at most $q+1$ full steps.
    Afterwards, every full step satisfies $\beta_0-\beta_{i+1}=(1+1/q)(\beta_0-\beta_i)$.
    Once $\beta_0-\beta_i>2q$, we have $d_i>2$, so the procedure stops.
    Thus the second phase has at most $\lceil\log_{1+1/q}(2h)\rceil+1=O(q\log h)$ steps.
    This proves the claimed length bound.
\end{proof}

Let $z(\beta)=\log Z(\beta)$ and for $i = 0, 1, \ldots, \ell - 1$, define $w_i=z(\beta_i)-z(\beta_{i+1})$.
\begin{lemma}
    \label{lem:non-adaptive-cooling-schedule-properties}
    The cooling schedule $(\beta_0,\ldots,\beta_\ell)$ returned by \pref{alg:non-adaptive-cooling-schedule} has the following properties:
    \begin{enumerate}
        \item $\sum_{i=0}^{\ell-1} w_i \le q$;
        \item $d_i \cdot z'(\beta_i) \le 1$ holds for all $i = 0, 1, \ldots, \ell-1$;
        \item $0 \le w_i \le 1$ holds for all $i = 0, 1, \ldots, \ell-1$.
    \end{enumerate}
\end{lemma}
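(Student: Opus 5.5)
The plan is to use two standard facts about the log-partition function $z(\beta)=\log Z(\beta)$. First, $z'(\beta)=\mathbb{E}_{X\sim\mu_\beta}[H(X)]\ge 0$. Second, $z''(\beta)=\mathrm{Var}_{\mu_\beta}(H)\ge 0$. So $z$ is nondecreasing and convex, and $z'$ is nondecreasing. In particular $z'(\beta)\le z'(\beta_{\max})\le h$ for every $\beta\le\beta_{\max}$. The schedule is decreasing, $\beta_0=\beta_{\max}>\beta_1>\dots>\beta_\ell=\beta_{\min}$; it is strictly decreasing because $\beta_{\min}$ is appended only when $\beta_i-d_i\le\beta_{\min}$.

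\textbf{Item 1.} The sum telescopes: $\sum_i w_i=z(\beta_0)-z(\beta_\ell)=\log Q\le q$.

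\textbf{Lower bound in item 3.} We have $w_i\ge 0$ because $z$ is nondecreasing and $\beta_{i+1}<\beta_i$. If $\beta_{\min}=-\infty$, monotonicity still applies, since $Z(\beta)\downarrow |H^{-1}(0)|=Z(-\infty)$ as $\beta\to-\infty$. This uses the assumption $H(\Omega)\subseteq\{0\}\cup[1,\infty)$.

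\textbf{Upper bound in item 3.} By the mean value theorem and monotonicity of $z'$, $w_i\le(\beta_i-\beta_{i+1})z'(\beta_i)\le d_i z'(\beta_i)$. For the last, possibly non-full, interval the step is at most $d_i$. When $\beta_{i+1}=-\infty$ we use convexity instead: $z(\beta_i)-z(\beta)\le(\beta_i-\beta)z'(\beta_i)$ for every finite $\beta\ge \beta_i-d_i$. Letting $\beta$ tend to $\beta_{\min}$ (this keeps $\beta\ge\beta_i-d_i$ since $\beta_{\min}\ge\beta_i-d_i$ in that case) gives the same bound. So item 3 follows from item 2.

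\textbf{Item 2: $d_i z'(\beta_i)\le 1$.} This is the main step. There are two cases according to which term attains the maximum defining $d_i$.

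\emph{Case $d_i=1/h$.} Then $d_i z'(\beta_i)\le z'(\beta_{\max})/h\le 1$.

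\emph{Case $d_i=(\beta_{\max}-\beta_i)/q$.} Use convexity on $[\beta_i,\beta_{\max}]$:
\[
(\beta_{\max}-\beta_i)\,z'(\beta_i)\le z(\beta_{\max})-z(\beta_i).
\]
It remains to show $z(\beta_{\max})-z(\beta_i)\le \log Q\le q$. By monotonicity, $z(\beta_i)\ge z(\beta_{\min})$, which gives exactly this. Dividing by $q$ yields $d_i z'(\beta_i)\le 1$.

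Item 2 is needed only for indices $i\le \ell-1$, which are exactly those at which $d_i$ was computed. At the final index of the loop the step is non-full, but $d_i$ is still defined and the argument applies unchanged.

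The only delicate point is the $\beta_{\min}=-\infty$ endpoint, which is handled by the limiting argument above.
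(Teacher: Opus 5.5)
Your items 1 and 2 and the lower bound in item 3 are correct and follow the paper's argument: $z'$ is nondecreasing, $z'\le h$, and $(\beta_{\max}-\beta_i)z'(\beta_i)\le z(\beta_{\max})-z(\beta_i)\le q$. The gap is in the upper bound of item 3 for the last interval. You assert that the last step has length at most $d_{\ell-1}$. But the construction also terminates when $d_i>2$, whether or not $\beta_i-d_i\le\beta_{\min}$. In that case the appended interval $[\beta_{\min},\beta_{\ell-1}]$ is strictly longer than $d_{\ell-1}$, so $(\beta_{\ell-1}-\beta_{\min})z'(\beta_{\ell-1})$ need not be at most $1$. When $\beta_{\min}=-\infty$ (as in the hard-core application), this is the \emph{only} way the loop can stop, because $\beta_i-d_i$ is always finite. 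Your limiting argument relies on ``$\beta_{\min}\ge\beta_i-d_i$ in that case,'' which is false for $\beta_{\min}=-\infty$. Indeed, $(\beta_{\ell-1}-\beta)z'(\beta_{\ell-1})\to\infty$ as $\beta\to-\infty$ unless $z'(\beta_{\ell-1})=0$. Your opening remark that $\beta_{\min}$ is appended only when $\beta_i-d_i\le\beta_{\min}$ is the same misreading of the stopping rule.

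The missing idea combines item 2 with the gap assumption $H(\Omega)\subseteq\{0\}\cup[1,\infty)$. If $d_{\ell-1}>2$, then $\mathbb{E}_{\mu_{\beta_{\ell-1}}}[H]=z'(\beta_{\ell-1})\le 1/d_{\ell-1}<1/2$. Since every positive value of $H$ is at least $1$, Markov's inequality gives $\mu_{\beta_{\ell-1}}(H=0)>1/2$. By monotonicity of $z$,
\[
w_{\ell-1}\le z(\beta_{\ell-1})-z(-\infty)=\log\frac{Z(\beta_{\ell-1})}{|H^{-1}(0)|}=-\log\mu_{\beta_{\ell-1}}(H=0)<\log 2<1,
\]
and this covers finite $\beta_{\min}$ as well. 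This is where the gap assumption is genuinely needed; your use of it for monotonicity at $-\infty$ only requires $H\ge 0$. Adding this case to your argument for the case where the last step has length at most $d_{\ell-1}$ completes item 3.
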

\begin{proof}
    The first property follows by telescoping: $\sum_{i=0}^{\ell-1}w_i=z(\beta_0)-z(\beta_\ell)\le q$.
    For the second property, $z'(\beta)=\E[X\sim\mu_\beta]{H(X)}$ is nondecreasing since $z''(\beta)=\operatorname{Var}_{X\sim\mu_\beta}[H(X)]\ge0$.
    Hence $z'(\beta_i)\le z'(\beta_0)\le h$.
    For $i\ge1$, monotonicity also gives
    \begin{align*}
        q&\ge z(\beta_0)-z(\beta_i)
        =\int_{\beta_i}^{\beta_0}z'(\beta)\d\beta
        \ge(\beta_0-\beta_i) \cdot z'(\beta_i),
    \end{align*}
    so $z'(\beta_i)\le q/(\beta_0-\beta_i)$.
    For $i=0$, we have $d_0=1/h$, so $d_0 \cdot z'(\beta_0)\le1$. For $i\ge1$, taking the minimum of the two bounds gives $z'(\beta_i)\le\min\{h,q/(\beta_0-\beta_i)\}=1/d_i$, and hence $d_i \cdot z'(\beta_i)\le1$.

    Every full interval $i\le\ell-2$ has length $d_i$, so $w_i\le d_i \cdot z'(\beta_i)\le1$.
    The same bound applies to the last interval when its length is at most $d_{\ell-1}$.
    Otherwise, $d_{\ell-1}>2$, so $\E[X\sim\mu_{\beta_{\ell-1}}]{H(X)}\le1/d_{\ell-1}<1/2$.
    Since every positive value of $H$ is at least $1$, $\Pr[X\sim\mu_{\beta_{\ell-1}}]{H(X)=0}>1/2$, and hence $w_{\ell-1}<\log 2<1$.
    Nonnegativity follows from the monotonicity of $z$.
\end{proof}

Finally, we show that the internal intervals of the cooling schedule are mildly growing.
\begin{lemma}
    \label{lem:non-adaptive-cooling-schedule-mildly-growing}
    For any $i = 0, 1, \ldots, \ell-2$, we have $d_i \le d_{i + 1} \le (1 + 1/q) d_i$.
\end{lemma}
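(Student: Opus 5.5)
Fix $i\le\ell-2$. Then both steps $i$ and $i+1$ are full: step $i+1$ is generated rather than appended, since the last appended interval is interval $\ell-1$. So $\beta_{i+1}=\beta_i-d_i$, and $d_{i+1}$ is computed by the same formula. Write $D_i=\beta_{\max}-\beta_i=\beta_0-\beta_i$, so that $d_i=\max\{1/h,D_i/q\}$ and $D_{i+1}=D_i+d_i$. The plan is a direct computation from this recursion, with no properties of $z$ needed.

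For the lower bound, note that $D_{i+1}=D_i+d_i\ge D_i$ because $d_i>0$. The map $x\mapsto\max\{1/h,x/q\}$ is nondecreasing, so $d_{i+1}\ge d_i$.

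For the upper bound, I will split into the two cases of the maximum defining $d_{i+1}$.
\begin{itemize}
\item If $d_{i+1}=1/h$, then $d_{i+1}=1/h\le d_i\le(1+1/q)d_i$.
\item If $d_{i+1}=D_{i+1}/q$, then
\[
d_{i+1}=\frac{D_i+d_i}{q}=\frac{D_i}{q}+\frac{d_i}{q}\le d_i+\frac{d_i}{q}=(1+1/q)\,d_i,
\]
using $D_i/q\le d_i$.
\end{itemize}
Both cases give $d_{i+1}\le(1+1/q)d_i$, which completes the argument.

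The only point needing care is the index range: why $d_{i+1}$ is the step actually produced by the loop, so that the recursion $D_{i+1}=D_i+d_i$ holds. This follows from the definition of full intervals, since only the final interval $\ell-1$ can be non-full. Beyond that, the argument is routine.
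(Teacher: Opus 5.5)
Your proof is correct and follows the paper's argument: fullness of step $i$ gives $\beta_0-\beta_{i+1}=\beta_0-\beta_i+d_i$, monotonicity of $x\mapsto\max\{1/h,x/q\}$ gives $d_{i+1}\ge d_i$, and $1/h\le d_i$ together with $(\beta_0-\beta_i)/q\le d_i$ gives $d_{i+1}\le(1+1/q)d_i$. One small wording fix: for $i=\ell-2$, interval $i+1=\ell-1$ is the appended one, so it is not true that both steps are full. You only need step $i$ to be full, and $d_{i+1}=d_{\ell-1}$ is still given by the same formula because the loop computes it before the termination test.
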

\begin{proof}
    Since step $i$ is full, $\beta_0-\beta_{i+1}=\beta_0-\beta_i+d_i$.
    Using the definition of $d_{i+1}$ and the fact that $(\beta_0-\beta_i)/q\le d_i$, we obtain $d_{i+1}=\max\{1/h,(\beta_0-\beta_i+d_i)/q\}\le(1+1/q)d_i$.
    The same maximum representation and $\beta_0-\beta_{i+1}\ge\beta_0-\beta_i$ give $d_{i+1}\ge d_i$.
\end{proof}

Given the above cooling schedule, we estimate the ratio using the classic product estimator.

\begingroup
\SetAlgoSkip{smallskip}
\begin{algorithm}[H]
\caption{Estimation of the partition-function ratio}
\label{alg:estimate-partition-ratio}
\KwIn{A cooling schedule $(\beta_0,\ldots,\beta_\ell)$ and an error parameter $\varepsilon\in(0,1)$}
\KwOut{An unbiased estimate $\overline Y$ of $Q^{-1}$}
Set $M \gets \left\lceil 4(\e^3-1)/\varepsilon^2 \right\rceil$\;
\For{$j$ \textnormal{from} $1$ \textnormal{to} $M$}{
  \For{$i$ \textnormal{from} $0$ \textnormal{to} $\ell-1$}{
    Sample $X_{i,j}\sim\mu_{\beta_i}$\;
    \eIf{$\beta_{i+1}>-\infty$}{
      Set $Y_{i,j} \gets \exp((\beta_{i+1}-\beta_i)\cdot H(X_{i,j}))$\;
    }{
      Set $Y_{i,j} \gets \mathbf{1}[H(X_{i,j})=0]$\;
    }
  }
  Set $Y_j \gets \prod_{i=0}^{\ell-1} Y_{i,j}$
}
Set $\overline Y \gets \frac{1}{M}\sum_{j=1}^M Y_j$\;
\Return{$\overline Y$}\;
\end{algorithm}
\endgroup

\begin{proof}[Proof of \pref{thm:main-result}]
    Run \pref{alg:estimate-partition-ratio} with error parameter $\varepsilon/2$ and return $\widehat Q=1/\overline Y$ when $\overline Y>0$, and $\widehat Q=0$ otherwise.
    For every $i$, including when $\beta_{i+1}=-\infty$, we have $\oE[Y_{i,j}]=Z(\beta_{i+1})/Z(\beta_i)$.
    Independence and telescoping give $\oE[Y_j]=\prod_{i=0}^{\ell-1}\oE[Y_{i,j}]=Q^{-1}$ and $\oE[\overline Y]=Q^{-1}$.
    For every full interval $i\le\ell-2$, direct calculation gives
    $$\log\frac{\oE[Y_{i,j}^2]}{\oE[Y_{i,j}]^2} = z(\beta_i-2d_i)+z(\beta_i)-2z(\beta_i-d_i) = w_i-\bigl(z(\beta_{i+1})-z(\beta_{i+1}-d_i)\bigr).$$
    For $i\le\ell-3$, the next interval is also full, so
    $\beta_{i+2}=\beta_{i+1}-d_{i+1}$.
    Since $z'$ is nondecreasing and $d_i\le d_{i+1}$, its average over
    the rightmost interval of length $d_i$ is at least its average over
    the interval of length $d_{i+1}$. Therefore,
    \begin{align*}
        \frac{z(\beta_{i+1})-z(\beta_{i+1}-d_i)}{d_i}
        \ge
        \frac{z(\beta_{i+1})-z(\beta_{i+1}-d_{i+1})}{d_{i+1}} 
        =\frac{w_{i+1}}{d_{i+1}},
    \end{align*}
    and hence
    $z(\beta_{i+1})-z(\beta_{i+1}-d_i)\ge\frac{d_i}{d_{i+1}}w_{i+1}.$
    
    \begin{itemize}
        \item If $\ell<2$, there are no full intervals.
        For $\ell\ge2$, dropping the nonnegative term for $i=\ell-2$ and applying \pref{lem:non-adaptive-cooling-schedule-properties} and \pref{lem:non-adaptive-cooling-schedule-mildly-growing} give
        $$\sum_{i=0}^{\ell-2} \log\frac{\oE[Y_{i,j}^2]}{\oE[Y_{i,j}]^2} \le w_0+\sum_{i=1}^{\ell-2} \left(1-\frac{d_{i-1}}{d_i}\right)w_i  \le w_0+\frac1{q+1}\sum_{i=1}^{\ell-2}w_i<2.$$

        \item If $\beta_\ell=-\infty$, $Y_{\ell-1,j}$ is an indicator and the logarithmic second-moment ratio is $w_{\ell-1}\le1$ by \pref{lem:non-adaptive-cooling-schedule-properties}.
        Otherwise, monotonicity of $z$ gives
        \begin{align*}
            \log\frac{\oE[Y_{\ell-1,j}^2]}{\oE[Y_{\ell-1,j}]^2}
            &=z(2\beta_\ell-\beta_{\ell-1})+z(\beta_{\ell-1})-2z(\beta_\ell)
            \le w_{\ell-1}\le1.
        \end{align*}
    \end{itemize}

    Independence of the $Y_{i,j}$ together with the above bounds gives
    $$\frac{\operatorname{Var}[Y_j]}{\oE[Y_j]^2}=\prod_{i=0}^{\ell-1}\frac{\oE[Y_{i,j}^2]}{\oE[Y_{i,j}]^2}-1=\exp\left(\sum_{i=0}^{\ell-1}\log\frac{\oE[Y_{i,j}^2]}{\oE[Y_{i,j}]^2}\right)-1<\e^{2+1}-1=\e^3-1.$$
    Chebyshev's inequality and $M=\lceil16(\e^3-1)/\varepsilon^2\rceil$ give $\Pr{|\overline Y/\oE[Y_j]-1|>\varepsilon/2}\le1/4$.
    Taking reciprocals on the complementary event gives $(1-\varepsilon)Q\le\widehat Q\le(1+\varepsilon)Q$.
    By \pref{lem:non-adaptive-cooling-schedule-length}, the algorithm uses $O(q\varepsilon^{-2}\log h)$ samples in one round.
\end{proof}

\section{Proofs of the Lower Bounds}

\subsection{Lower Bound for General Algorithms}

We prove the lower bound by constructing two instances with separated partition ratios but small KL divergence between their oracle outputs at every temperature. Any accurate estimator must distinguish the two instances, and the KL chain rule then gives the required number of queries, even when the queries are adaptive.
\generallowerbound*

\begin{proof}
    Fix $q\geq256$, $h\geq\e^{256}$, and $\varepsilon\in(0,1/4)$ throughout the proof.
    We construct a pair of instances $(\Gamma^0, \Gamma^1)$ with the same Hamiltonian values but different multiplicities, which any accurate estimator of $Q$ must distinguish.
    Specifically, let $\Gamma^0$ be a Gibbs instance with state space $\Omega$.
    Let $f:\Omega\to\mathbb Q_{\geq0}$ and $t\in\mathbb Q_{>0}$ be a function and a rational we choose later.
    Ideally, we want to construct $\Gamma^1$ so that for each $X\in\Omega$, there are $1+t \cdot f(X)$ copies of $X$ with the same Hamiltonian $H(X)$ in the state space of $\Gamma^1$.
    However, since $1+t\cdot f(X)$ may not be an integer, we multiply the number of states in $\Gamma^0$ and $\Gamma^1$ by a sufficiently large common multiplier so that all multiplicities are integers.
    Fix $-\infty\leq\beta_{\min}\leq\beta_{\max}<\infty$.
    For $b\in\{0,1\}$, let $Z^b(\beta)$ be the partition function of $\Gamma^b$ and set $Q^b=Z^b(\beta_{\max})/Z^b(\beta_{\min})$.   
    An immediate consequence of the construction is
    \begin{align*}
        \frac{Z^1(\beta)}{Z^0(\beta)}
        =\sum_{X\in\Omega}(1+t\cdot f(X)) \frac{\exp(\beta H(X))}{Z^0(\beta)} = \tp{1+t \cdot \mathbb E_{X\sim\mu^0_\beta}[f(X)]},
    \end{align*}
    Therefore,
    \begin{align}\label{eqn:general-ratio-identity}
        \frac{Q^1}{Q^0}
        =\frac{Z^1(\beta_{\max})/Z^0(\beta_{\max})}
        {Z^1(\beta_{\min})/Z^0(\beta_{\min})}
        =\frac{1+t \cdot \E[X\sim\mu^0_{\beta_{\max}}]{f(X)}}
        {1+t \cdot \E[X\sim\mu^0_{\beta_{\min}}]{f(X)}}.
    \end{align}
    
    For every $\beta\in[\beta_{\min},\beta_{\max}]$, let $\nu^b_\beta$ be the distribution of the oracle output $H(X)$ under $\Gamma^b$.
    The next claim bounds the KL divergence between the two oracle output distributions at any query temperature.
    Its proof is deferred to the end of this subsection.
    \begin{claim}
        \label{claim:general-query-kl}
        At every query temperature $\beta\in[\beta_{\min},\beta_{\max}]$, the oracle output distributions satisfy
        \[
            \DKL{\nu^1_\beta}{\nu^0_\beta}
            \leq t^2 \cdot \operatorname{Var}_{X\sim\mu^0_\beta}[f(X)].
        \]
    \end{claim}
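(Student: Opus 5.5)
The plan is to compute the two oracle output distributions explicitly and bound their KL divergence by a $\chi^2$-divergence. Under $\Gamma^b$, the oracle returns $H(X)$ where $X$ is drawn from the Gibbs measure of $\Gamma^b$. In $\Gamma^1$, each $X\in\Omega$ has weight $(1+t f(X))\exp(\beta H(X))$, and the common multiplier cancels after normalization. For every value $v$ in the range of $H$ we therefore get
\[
  \nu^1_\beta(v)=\frac{\sum_{X:H(X)=v}(1+tf(X))\exp(\beta H(X))}{Z^1(\beta)}
  =\nu^0_\beta(v)\cdot\frac{1+t\,g(v)}{1+t\,\E[X\sim\mu^0_\beta]{f(X)}},
\]
where $g(v)=\E[X\sim\mu^0_\beta]{f(X)\mid H(X)=v}$. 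For the denominator we use the identity $Z^1(\beta)/Z^0(\beta)=1+t\,\E[X\sim\mu^0_\beta]{f(X)}$ derived above. Write $m=\E[\mu^0_\beta]{f}=\E[V\sim\nu^0_\beta]{g(V)}$. Then the likelihood ratio is $r(V)=(1+tg(V))/(1+tm)$, where $V\sim\nu^0_\beta$ is the oracle output.

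Next we bound KL by $\chi^2$, using $\log x\le x-1$:
\[
  \DKL{\nu^1_\beta}{\nu^0_\beta}=\E[\nu^0_\beta]{r\log r}\le\E[\nu^0_\beta]{r(r-1)}=\E[\nu^0_\beta]{r^2}-1=\chi^2(\nu^1_\beta\,\|\,\nu^0_\beta).
\]
Since $\E[\nu^0_\beta]{r}=1$, this $\chi^2$ equals $\operatorname{Var}_{\nu^0_\beta}[r]$, which is
\[
  \frac{t^2\operatorname{Var}_{V\sim\nu^0_\beta}[g(V)]}{(1+tm)^2}.
\]

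It remains to compare with $\operatorname{Var}_{\mu^0_\beta}[f]$. There are two steps.
\begin{enumerate}
  \item Because $f\ge0$ and $t>0$, we have $1+tm\ge1$, so the denominator can be dropped.
  \item Since $g(V)$ is the conditional expectation of $f(X)$ given $H(X)=V$, the law of total variance (equivalently, Jensen's inequality) gives $\operatorname{Var}[g(V)]\le\operatorname{Var}_{X\sim\mu^0_\beta}[f(X)]$.
\end{enumerate}
Together these yield $\DKL{\nu^1_\beta}{\nu^0_\beta}\le t^2\operatorname{Var}_{X\sim\mu^0_\beta}[f(X)]$.

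The one point needing care is $\beta=-\infty$, which can occur when $\beta_{\min}=-\infty$. There both measures are uniform on the respective zero-sets, and the same computation goes through with $\exp(\beta H)$ replaced by $\mathbf{1}[H=0]$. In fact, there $\nu^b_{-\infty}$ is a point mass at $0$ for both $b$, so the KL divergence is $0$ and the bound is trivial.

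I do not expect a genuine obstacle. The main thing to check is that the oracle reveals only $H(X)$, which is what forces the conditional-expectation (data-processing) step rather than working directly with $f$ on $\Omega$.
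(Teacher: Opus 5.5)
Your proof is correct and is essentially the paper's argument: the likelihood ratio is affine in $f$, the KL divergence is bounded by the $\chi^2$-divergence (the variance of that ratio), and the factor $(1+t\,\mathbb{E}[f])^{-2}$ is dropped using $f\ge 0$. The only difference is where you pass from $\Omega$ to the oracle output. The paper bounds the KL divergence on $\Omega$ via the projection $\widetilde\mu^1_\beta$ and then applies data processing to $H(X)$, whereas you push forward to $H(X)$ first and recover $\operatorname{Var}[f]$ from $\operatorname{Var}[g(V)]$ by the law of total variance, which also gives the slightly sharper intermediate bound $t^2\operatorname{Var}[\mathbb{E}[f\mid H]]$.
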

    \noindent
    \pref{claim:general-query-kl} and \eqref{eqn:general-ratio-identity} suggest that we should choose $f$ with a large difference between its endpoint expectations and a uniformly small variance. With a suitable choice of $t$, this allows us to separate the partition ratios while keeping the oracle output distributions close at every temperature.


    We now specify $\Gamma^0$. Set $a=\lfloor\min\{q,\log h\}/16\rfloor$, $r=\lfloor q/(4a)\rfloor$, and $C=ra^2\cdot2^{r(a\lceil\e^{2a}\rceil+1)}$.
    For each tuple $(y_1,\ldots,y_r)\in\{0,\ldots,\lceil\e^{2a}\rceil\}^r$, we include $C(1-2^{-a})^{\sum_i y_i}$ states labeled by this tuple, each satisfying
    \[
        H(X)=1+\frac1r\sum_{i=1}^r y_i,
        \qquad
        f(X)=\sum_{i=1}^r\sum_{\ell=1}^{y_i}\frac1\ell.
    \]
    Also include one state $X$ with $H(X)=f(X)=0$.
    These multiplicities are positive integers because $C$ is divisible by $2^{ar\lceil\e^{2a}\rceil}$ and $\sum_i y_i\leq r\lceil\e^{2a}\rceil$.
    Set $t=16/(ra\lceil1/\varepsilon\rceil)$ and construct $\Gamma^1$ as above.

    Set $\beta_{\min}=-r\log(2(1-2^{-a}))$ and $\beta_{\max}=0$. Here $a\geq16$ and $r\geq4$.
    Both instances satisfy the bound on $h$, since every Hamiltonian value is at most $1+\lceil\e^{2a}\rceil$, we can verify that
    \[
    1 + \ceil{\e^{2a}} \leq 2 + \e^{2a} < \e^{16a} \leq h.
    \]
    For the bound on $q$, write the partition function as follows:

    \[
\begin{aligned}
    Z^0(\beta)-1
    &=
    \sum_{y_1=0}^{\ceil{\e^{2a}}}\cdots
    \sum_{y_r=0}^{\ceil{\e^{2a}}}
    C(1-2^{-a})^{\sum_{i=1}^r y_i}
    \exp\left(\beta+\frac{\beta}{r}\sum_{i=1}^r y_i\right)\\
    &=
    C\exp(\beta)\prod_{i=1}^r
    \left(
        \sum_{y_i=0}^{\ceil{\e^{2a}}}
        \bigl((1-2^{-a})\exp(\beta/r)\bigr)^{y_i}
    \right)\\
    &=
    C\exp(\beta)\left(
        \sum_{k=0}^{\ceil{\e^{2a}}}
        \bigl((1-2^{-a})\exp(\beta/r)\bigr)^k
    \right)^r.
\end{aligned}
\]
    The sum is at least $1$ at $\beta_{\min}$ and at most $2^a$ at $\beta_{\max}$. Including the state with Hamiltonian value $0$ decreases the ratio, so $\log Q^0<r(a+1)<q$.
    The bound $\sum_{\ell=1}^k1/\ell<3a$ for $1\leq k\leq\lceil\e^{2a}\rceil$ also gives $0\leq f(X)\leq3ra$.
    Since $3rat<12$, \eqref{eqn:general-ratio-identity} yields $\log Q^1<r(a+1)+\log13<q$.

    To bound the expectation and variance of $f(X)$ for $X\sim\mu^0_\beta$, fix any $\beta\in[\beta_{\min},\beta_{\max}]$ and set $z=(1-2^{-a})\exp(\beta/r)$, so that $1/2\leq z\leq1-2^{-a}$. The factorization above shows that, conditional on $H(X)>0$, the coordinates of the tuple labeling $X$ are independent, each with probability proportional to $z^k$ for $0\leq k\leq\lceil\e^{2a}\rceil$. Thus these coordinates have truncated geometric distributions. We first estimate the moments of the corresponding harmonic sums for independent, untruncated geometric random variables $Y_1,\ldots,Y_r$ with $\mathbb P[Y_i=k]=(1-z)z^k$ for integers $k\geq0$, and then account for truncation.
    Summing $\mathbb P[Y_1\geq\ell]=z^\ell$ and expanding the second moment give
    \[
        \E{\sum_{\ell=1}^{Y_1}\frac1\ell}=-\log(1-z),\qquad
        \operatorname{Var}\left[\sum_{\ell=1}^{Y_1}\frac1\ell\right]
        =\sum_{\ell=1}^{\infty}\frac{z^\ell}{\ell^2}<2.
    \]
    Since $1-z\geq\e^{-a}$, geometric summation gives $\mathbb P[Y_1>\lceil\e^{2a}\rceil]<1/8$ and the tail bound
    \[
        \E{Y_1 \cdot \mathbf{1}[Y_1>\lceil\e^{2a}\rceil]}
        \leq2\exp(3a-\e^a)<\frac1{16}.
    \]
    Conditioning on $Y_1\leq\lceil\e^{2a}\rceil$ therefore reduces the mean of $\sum_{\ell=1}^{Y_1}1/\ell$ by less than $1/16$, since this sum is increasing and at most $Y_1$.
    Bounding the squared deviation from the original mean gives conditional variance less than $16/7$.

    Conditioning each $Y_i$ on $Y_i\leq\lceil\e^{2a}\rceil$ preserves independence.
    By the factorization above, the resulting tuple $(Y_1,\ldots,Y_r)$ has the same distribution as the tuple labeling $X\sim\mu^0_\beta$ conditional on $H(X)>0$.
    Thus, given $H(X)>0$, $f(X)$ is a sum of independent coordinate functions, so its mean is at most $ra$ and its variance is less than $16r/7$.
    The states with Hamiltonian value $1$ have total weight at least $C\exp(\beta_{\min})\geq ra^2$, so $\mathbb P_{X\sim\mu^0_\beta}[H(X)=0]\leq1/(ra^2)$.
    Hence $0\leq\mathbb E_{X\sim\mu^0_\beta}[f(X)]\leq ra$, and accounting for $H(X)=0$ gives
    \[
        \operatorname{Var}_{X\sim\mu^0_\beta}[f(X)]
        <\frac{16r}{7}+\frac{(ra)^2}{ra^2}<4r.
    \]
    At $\beta=\beta_{\max}$, we have $z=1-2^{-a}$. Combining the truncated mean estimate with the bound on the probability of $H(X)=0$ shows that the expectation of $f(X)$ under $\mu^0_{\beta_{\max}}$ is greater than $r(a\log2-1/8)$.
    At $\beta=\beta_{\min}$, $z=1/2$ gives an untruncated mean of $r\log2$. Truncation can only decrease this mean, so the conditional expectation of $f(X)$ given $H(X)>0$ is at most $r\log2$. Since $f(X)=0$ when $H(X)=0$, the expectation of $f(X)$ under $\mu^0_{\beta_{\min}}$ is also at most $r\log2<r$.
    Thus the former expectation exceeds the latter by more than $ra/2$.
    Applying \eqref{eqn:general-ratio-identity}, we obtain
    \[
        \frac{Q^1}{Q^0}
        >1+\frac{8a}{a\ceil{1/\varepsilon}+16}>1+4\varepsilon
        >\frac{1+\varepsilon}{1-\varepsilon},
    \]
    where we used $a\lceil1/\varepsilon\rceil+16<2a/\varepsilon$.
    Combining the variance bound above with \pref{claim:general-query-kl}, we obtain $\mathcal D_{\mathrm{KL}}(\nu^1_\beta\|\nu^0_\beta)\leq1024\varepsilon^2/(ra^2)$ for every $\beta\in[\beta_{\min},\beta_{\max}]$.

    Let $\+A$ be an algorithm as in the theorem, and let $P^0,P^1$ be its output distributions on the two instances.
    The disjoint success intervals imply that $\+B=\{(1-\varepsilon)Q^1\leq\widehat Q\leq(1+\varepsilon)Q^1\}$ satisfies $P^1(\+B)\geq3/4$ and $P^0(\+B)\leq1/4$.
    Applying data processing to the indicator of $\+B$ gives
    \[
        \DKL{P^1}{P^0}
        \geq P^1(\+B)\log\frac{P^1(\+B)}{P^0(\+B)}
        +P^1(\overline{\+B})\log\frac{P^1(\overline{\+B})}{P^0(\overline{\+B})}
        \geq\frac{\log3}{2}>\frac12.
    \]
    Let $\+R$ denote all internal randomness of $\+A$, independent of the oracle's randomness.
    The public information and the law of $\+R$ are the same for both instances. Given $\+R$ and previous answers, the stopping decision and next query are therefore the same.
    Since the estimate is determined by $\+R$ and the answers, data processing and the KL chain rule give
    \[
        \frac12<\DKL{P^1}{P^0}
        \leq \E{\sum_{i\le N}\DKL{\nu^1_{\beta_i}}{\nu^0_{\beta_i}}}
        \leq\frac{1024N\varepsilon^2}{ra^2},
    \]
    where the expectation is over the algorithm's internal randomness and oracle answers during a run on the fixed instance $\Gamma^1$, and the sum runs over the queries actually made. 
    Each $\beta_i$ may depend on the internal randomness and previous answers.
    Since $ra^2=\Theta(q\min\{q,\log h\})$, this yields the required bound $N>ra^2/(2048\varepsilon^2)=\Omega(\min\{q\log h,q^2\}\cdot\varepsilon^{-2})$.
\end{proof}


\begin{proof}[Proof of~\pref{claim:general-query-kl}]
    Consider the distribution $\widetilde{\mu}^1_\beta$ by projecting copies of $X$ to $X$.
    Then, for any $X\in\Omega$, using \eqref{eqn:general-ratio-identity}, we have
    \begin{align}\label{eqn:R-f}
       R(X):= \frac{\widetilde{\mu}^1_\beta(X)}{\mu^0_\beta(X)} = \frac{1+t \cdot f(X)}{1+t \cdot \mathbb E_{X\sim\mu^0_\beta}[f(X)]}.
    \end{align}
    Note that $\mathbb{E}_{X\sim\mu^0_\beta} R(X)=1$.
    Then we have 
    \begin{align*}
    \DKL{\widetilde{\mu}^1_\beta}{\mu^0_\beta} &= \mathbb E_{X\sim\mu^0_\beta} [R(X)\log R(X)] \\
    &\le \mathbb E_{X\sim\mu^0_\beta} [R(X)^2]-1 
    = \operatorname{Var}_{X\sim\mu^0_\beta}[R(X)].
    \end{align*}
    By \eqref{eqn:R-f}, $R$ is an affine transformation of $f$, and therefore
    \begin{align*}
        \operatorname{Var}_{X\sim\mu^0_\beta}[R(X)] = \frac{t^2 \cdot \operatorname{Var}_{X\sim\mu^0_\beta}[f(X)]}
        {\bigl(1+t \cdot \E[X\sim\mu^0_\beta]{f(X)}\bigr)^2}.
    \end{align*}
    Applying the data processing inequality to $H(X)$, we obtain
    \[
        \DKL{\nu^1_\beta}{\nu^0_\beta} \le \DKL{\widetilde{\mu}^1_\beta}{\mu^0_\beta}
        \leq\frac{t^2 \cdot \operatorname{Var}_{X\sim\mu^0_\beta}[f(X)]}
        {\bigl(1+t \cdot \E[X\sim\mu^0_\beta]{f(X)}\bigr)^2}
        \leq t^2 \cdot \operatorname{Var}_{X\sim\mu^0_\beta}[f(X)],
    \]
    where the last inequality uses $f(X)\geq0$.
\end{proof}

\vspace{-0.5em}
\subsection{Lower Bound for Non-Adaptive Algorithms}

We next prove the stronger lower bound for non-adaptive algorithms.
We construct pairs of instances differing only in the number of states at one Hamiltonian value, which any accurate estimator must distinguish.
Choosing these values carefully bounds the total information each query provides across all pairs.
Since the queries are fixed before sampling, summing over the pairs yields the lower bound.
\nonadaptivelowerbound*

\begin{proof}
    Let $q \geq 256$, $h \geq \e^{256}$ and $\eps \in (0,1/4)$.
    Suppose that there exists a non-adaptive algorithm $\+A$ which makes $N$ oracle queries and succeeds with probability at least $3/4$ on every input.
    Throughout the proof, we fix $\beta_{\min} = -\infty$ and $\beta_{\max} = 0$.
    Define two parameters
    $$M = \floor{\frac{\log (h / 4)}{\log (1 + 1 / q)}}, \qquad r = \floor{\frac{\e^q}{8}}.$$
    Let $E_1,\ldots,E_M$ be $M$ distinct Hamiltonian values, where $E_1 = 1$ and $E_j = E_{j - 1} \cdot (1 + 1/q)$ for $2 \leq j \leq M$.
    By definition, $E_M \leq h / 4$.
    The Hamiltonian may also take the values $0$ and $rh / 4$.
    We then construct $M$ pairs of instances $(\Gamma^0_j, \Gamma^1_j)$ for $j \in [M]$, where the two instances in each pair only differ in the number of states at one Hamiltonian value $E_j$.
    To be more specific, let $N^0_j(x)$ and $N^1_j(x)$ be the number of states at Hamiltonian value $x$ in $\Gamma^0_j$ and $\Gamma^1_j$, respectively.
    For $j \in [M]$, we define the multiplicities of each Hamiltonian value as follows:
    \begin{itemize}
        \item If $x = 0$ or $x = rh / 4$, then $N^0_j(x) = N^1_j(x) = M \cdot \ceil{1 / \eps}$;
        \item If $x = E_i$ for some $i \neq j$, then $N^0_j(x) = N^1_j(x) = \ceil{1 / \eps}$;
        \item If $x = E_j$, then $N^0_j(x) = Mr \ceil{1 / \eps}$ and $N^1_j(x) = Mr \left(\ceil{1 / \eps} + 8\right)$.
    \end{itemize}
    Let $(Z^0_j(\beta), Z^1_j(\beta))$ be the partition functions of $(\Gamma^0_j, \Gamma^1_j)$ at parameter $\beta$ and let $(Q^0_j, Q^1_j)$ be the corresponding partition function ratios between $\beta_{\min} = -\infty$ and $\beta_{\max} = 0$.
    Moreover, let $\nu^0_{j, \beta}$ and $\nu^1_{j, \beta}$ be the distributions of the oracle output $H(X)$ under $\Gamma^0_j$ and $\Gamma^1_j$ at parameter $\beta$, respectively.

    To begin with, we verify that $q$ and $h$ are both valid input parameters for all instances.
    Note that for any $j \in [M]$, we have $Z^0_j(-\infty) = Z^1_j(-\infty)$ and $Z^0_j(0) < Z^1_j(0)$.
    Therefore,
    $$\log Q^0_j < \log Q^1_j = \log \tp{\frac{8r}{\ceil{1 / \eps}} + r + 3 - \frac{1}{M}} \le q.$$ 
    This finishes the verification for $q$.
    On the other hand, observe that all Hamiltonian values $E_j$ are at most $h / 4$, and hence contribute at most $h / 4$ to the expected Hamiltonian value. 
    The Hamiltonian value $rh / 4$ occurs with probability at most $1 / r$ under $\nu^0_{j, 0}$ and $\nu^1_{j, 0}$, and hence contributes at most $h / 4$ to the expected Hamiltonian value.
    This finishes the verification for $h$.

    We first claim the following uniform bound on the KL divergence between the oracle outputs of each instance pair.
    Assuming this bound, we finish the proof of \pref{thm:non-adaptive-lower-bound} by a standard information-theoretic argument.
    We defer the proof of the claim to the end of this section.
    \begin{claim}
        \label{claim:uniform-query-kl}
        Let $q\geq256$, $h\geq\e^{256}$, and $\eps\in(0,1/4)$, and let $M$, $r$, and $\nu^b_{j,\beta}$ be as in the construction above.
        For every query temperature $\beta\in(-\infty,0]\cup\{-\infty\}$,
        \[
            \sum_{j=1}^M\DKL{\nu^1_{j,\beta}}{\nu^0_{j,\beta}}
            \leq 2048\eps^2\left(1+\frac Mr\right).
        \]
    \end{claim}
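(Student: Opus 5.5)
The plan is to reduce each KL term to a variance, as in the proof of \pref{claim:general-query-kl}, and then sum the resulting variances using the geometric spacing of the $E_j$.

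\textbf{Reduction to a single-atom tilt.} Fix $\beta$ and write $t=-\beta\ge0$. For $\beta=-\infty$ both output laws are the point mass at $0$, so every term vanishes; assume $\beta>-\infty$. Set $\delta=8/\lceil1/\eps\rceil\le8\eps$, so that $\Gamma^1_j$ is obtained from $\Gamma^0_j$ by multiplying the weight of the atom $E_j$ by $1+\delta$. Let $p_j=\nu^0_{j,\beta}(E_j)$. The likelihood ratio is
\[
R=\frac{1+\delta\mathbf 1[H=E_j]}{1+\delta p_j}.
\]
Exactly as in the proof of \pref{claim:general-query-kl}, $\DKL{\nu^1_{j,\beta}}{\nu^0_{j,\beta}}\le\operatorname{Var}[R]\le\delta^2p_j(1-p_j)\le64\eps^2p_j(1-p_j)$. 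It therefore suffices to show
\[
\sum_jp_j(1-p_j)\le32\bigl(1+M/r\bigr).
\]
Keeping the factor $1-p_j$ is essential. At $\beta=0$ each $p_j\approx r/(r+3)$, so $\sum_jp_j\approx M$, which is far too large; only the product $p_j(1-p_j)$ is small there.

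\textbf{Explicit form of $p_j$.} Put $x_i=\exp(-tE_i)$, $s=\sum_ix_i$, and $D=M\bigl(1+\exp(-trh/4)\bigr)+s$. After cancelling the common factor $\lceil1/\eps\rceil$,
\[
p_j=\frac{Mr\,x_j}{D-x_j+Mr\,x_j},\qquad 1-p_j=\frac{D-x_j}{D-x_j+Mr\,x_j}.
\]
With $y_j=Mr\,x_j/(D-x_j)$ this gives $p_j(1-p_j)\le\min\{y_j,1/y_j\}$. Since $D\ge M$ and $x_j\le1$, we have $D-x_j\ge D/2$ whenever $M\ge2$. Hence $y_j$ is within a factor $2$ of $\widetilde y_j=Mr\,x_j/D$. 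The whole problem reduces to bounding $\sum_j\min\{\widetilde y_j,1/\widetilde y_j\}$.

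\textbf{Splitting the sum.} Because $E_j=(1+1/q)^{j-1}$, the values $\log\widetilde y_j=\log(Mr/D)-tE_j$ are strictly decreasing in $j$, with gaps $tE_j/q$.

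\emph{Small side} ($\widetilde y_j<1$, large $j$). The gaps grow geometrically once $tE_j$ exceeds a constant multiple of $q$. I plan to bound $\sum\widetilde y_j$ by a doubly-exponentially decaying series dominated by its first few terms. This gives $O(1)$ after accounting for the indices near the crossover $\widetilde y_j\approx1$.

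\emph{Large side} ($\widetilde y_j\ge1$). Here I use $D\ge M$ to get $1/\widetilde y_j\le D/(Mr\,x_j)$. Then I split $D$ into its parts:
\begin{itemize}
\item The $M\bigl(1+\exp(-trh/4)\bigr)\le2M$ part contributes at most $\sum_j2/(r x_j)$ over indices with $x_j\gtrsim1/r$. I expect this to be $O(1+M/r)$ via the geometric decay of $x_j$ as $j$ moves away from the crossover.
\item The $s$ part contributes $\sum_j s/(Mr x_j)$. Since $x_j\ge x_i$ for $i\ge j$, the sum of $x_i/x_j$ over $i\ge j$ is controlled by the same decay argument as on the small side. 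The remaining terms $i<j$ number at most $M$, and each is at most $1/(Mr)$ in size after the division. Together these give $O(1+M/r)$.
\end{itemize}

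\textbf{Main obstacle.} The hard part is making the crossover region rigorous: the window of $j$ with $\widetilde y_j\in[1/\e,\e]$. Crossover requires $x_j\gtrsim D/(Mr)\ge1/r$, so $tE_j\lesssim q$ there. Then the spacing $tE_j/q$ in $\log\widetilde y_j$ is at most about $1$, and could be as small as $t/q$, which would put many indices in the window.

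To handle this, I will first try to show that when the spacing is small, many $x_i$ are close to $x_j$. Those $x_i$ push up $s$, and hence $D$, which forces $\widetilde y_j$ down and cuts the window to $O(1+M/r)$ indices. Concretely, I will compare $\sum_{i\ge j}x_i$ with $x_j$ times the number of indices in the window. I will then tune constants so that the total stays under $32(1+M/r)$, using $q\ge256$ and $r=\lfloor\e^q/8\rfloor$ for slack.
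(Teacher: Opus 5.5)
\textbf{Genuine gap at the crossover step.} Your reduction to $\sum_j p_j(1-p_j)\le 32(1+M/r)$ is correct, and the paper's first step is the same. The gap is in the step you call the main obstacle: both the diagnosis and the proposed cure are wrong, because you use only the lower bound $D\ge M$ and miss the matching upper bound. Every $x_i=\exp(-tE_i)\le 1$ and $\exp(-trh/4)\le 1$, so $s\le M$ and hence $M\le D\le 3M$. So $s$ can never push $D$ up by more than a factor $3$, and your planned mechanism (many $x_i$ near $x_j$ inflate $D$ and shrink the window) cannot do the work you assign to it.

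It is also not needed. Since $\widetilde y_j$ lies within a factor $3$ of $r x_j=\exp(\log r-tE_j)$, an index can lie in the window $\widetilde y_j\in[\exp(-1),\exp(1)]$ only if $tE_j\in[\log r-\log 3-1,\log r+1]$. The local gap $tE_j/q$ there is therefore at least $(\log r-\log 3-1)/q>0.98$, using $\log r\ge q-\log 16$ and $q\ge 256$. The gap is as small as $t/q$ only when $tE_j$ itself is small, i.e., far from the crossover. So the window contains $O(1)$ indices, and your worry about many indices crowding into it is unfounded.

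\textbf{The missing idea.} It is exactly this sandwich. The paper uses it to get $p_j(1-p_j)\le 3\exp(-|\log r-tE_j|)$ immediately, which reduces everything to a one-parameter sum over the geometric sequence $(tE_j)_j$. It then splits at $tE_j=(\log r)/2$:
\begin{itemize}
\item Above the split, consecutive exponents differ by more than $(\log r)/(2q)>1/3$, so the two geometric tails on either side of $\log r$ contribute less than $8$.
\item On the initial segment, it uses $\exp(u)/r\le 1/r+u/\sqrt r$ for $0\le u\le(\log r)/2$, together with $\sum_j tE_j\le (q+1)(\log r)/2$, to get at most $M/r+1$.
\end{itemize}
Some argument of this kind is genuinely needed on the initial segment. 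Your phrase ``geometric decay of $x_j$ as $j$ moves away from the crossover'' is false there, since the consecutive ratios $\exp(tE_j/q)$ tend to $1$.

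\textbf{A smaller error.} In your $s$-part bullet, the terms with $i<j$ are bounded only by $1/(Mr\,x_j)$, not by $1/(Mr)$. Once you use $s\le M$, that whole bullet collapses into the first one anyway.
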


    For every $j\in[M]$, note that $Z^0_j(-\infty) = Z^1_j(-\infty)$ and hence the partition ratios satisfy
    \[
        \frac{Q^1_j}{Q^0_j} = \frac{Z^1_j(0)}{Z^0_j(0)} = \frac{\frac{8r}{\ceil{1 / \eps}} + r + 3 - \frac{1}{M}}{r + 3 - \frac{1}{M}} > 1 + 3\eps > \frac{1 + \eps}{1 - \eps}.
    \]
    Let $P^0_j,P^1_j$ be the output distributions of $\+A$ on $\Gamma^0_j,\Gamma^1_j$.
    As in \pref{thm:general-lower-bound}, the disjoint success intervals imply $\mathcal D_{\mathrm{KL}}(P^1_j\|P^0_j)>1/2$ for every $j\in[M]$.
    
    Let $\+R$ denote all internal randomness of $\+A$, independent of the oracle's randomness.
    Since $\+A$ is non-adaptive and all instances have the same public information, fixing $\+R$ determines the same queries $\beta_i(\+R)$ for every instance.
    For every $b\in\{0,1\}$ and $j\in[M]$, conditional on $\+R$, the oracle outputs under $\Gamma^b_j$ have joint distribution $T^b_j(\+R)=\bigotimes_{i=1}^N\nu^b_{j,\beta_i(\+R)}$.
    The data processing inequality and the KL chain rule give, for every $j\in[M]$,
    \[
        \frac12<\DKL{P^1_j}{P^0_j}
        \leq \E[\+R]{\DKL{T^1_j(\+R)}{T^0_j(\+R)}}
        =\E[\+R]{\sum_{i=1}^N\DKL{\nu^1_{j,\beta_i(\+R)}}{\nu^0_{j,\beta_i(\+R)}}}.
    \]
    Summing over $j$ and applying \pref{claim:uniform-query-kl} at each query temperature gives
    \[
        \frac M2
        <\E[\+R]{\sum_{i=1}^N\sum_{j=1}^M
        \DKL{\nu^1_{j,\beta_i(\+R)}}{\nu^0_{j,\beta_i(\+R)}}}
        \leq2048\eps^2N\left(1+\frac Mr\right).
    \]
    Using $M=\Theta(q\log h)$ and $r=\Theta(\e^q)$, we conclude that
    \[
        N>\frac{Mr}{4096\eps^2(M+r)}
        \geq\frac{\min\{M,r\}}{8192\eps^2}
        =\Omega\left(\min\{q\log h,\e^q\} \cdot \eps^{-2}\right).
    \]
    This completes the proof of \pref{thm:non-adaptive-lower-bound}.
\end{proof}

\begin{proof}[Proof of~\pref{claim:uniform-query-kl}]
    At $\beta=-\infty$, both oracle outputs are identically zero, so their KL divergence is zero.
    Fix $\beta\in(-\infty,0]$.
    Note that \pref{claim:general-query-kl} does not rely on adaptivity.
    Taking $f(X)=\mathbf{1}[H(X)=E_j]$ and $t=8/\lceil1/\eps\rceil$ in \pref{claim:general-query-kl} gives the following bound for every $j\in[M]$:
    \[
        \DKL{\nu^1_{j,\beta}}{\nu^0_{j,\beta}}
        \leq64\eps^2 \cdot \nu^0_{j,\beta}(E_j)(1-\nu^0_{j,\beta}(E_j)).
    \]
    The probability of observing $E_j$ under the unperturbed instance is
    \[
        \nu^0_{j,\beta}(E_j)
        =\frac{r\e^{\beta E_j}}{r\e^{\beta E_j} + \tp{1+\e^{\beta rh/4}+\sum_{k\ne j}\e^{\beta E_k} / M}}\;.
    \]
    Since $\beta\le 0$, the denominator terms other than $r\e^{\beta E_j}$ sum to a value in $[1,3)$. 
    For $t > 0$, $t / (1 + t)^2 \leq \min\{t, 1 / t\} = \exp(-|\log t|)$.
    It implies that
    \[
        \nu^0_{j,\beta}(E_j)\bigl(1-\nu^0_{j,\beta}(E_j)\bigr)
        \leq\frac{3r\exp(\beta E_j)}{\bigl(1+r\exp(\beta E_j)\bigr)^2}
        \leq3\exp\bigl(-\lvert\beta E_j+\log r\rvert\bigr).
    \]
    At $\beta=0$, summing these bounds gives a total KL divergence of at most $192\eps^2M/r$.
    It remains to consider $\beta<0$.
    For $-\beta E_j\leq(\log r)/2$, the mean value theorem gives $\e^{-\beta E_j}/r\leq1/r+(-\beta E_j)/\sqrt r$.
    These indices form an initial segment, over which geometric summation bounds the sum of $-\beta E_j$ by $(q+1)\log r/2\leq q^2$.
    Since $r\geq\e^q/16$ and $q\geq256$,
    \[
        \sum_{j:-\beta E_j\leq(\log r)/2}\exp\bigl(-\lvert\beta E_j+\log r\rvert\bigr)
        \leq\frac Mr+\frac{q^2}{\sqrt r}<\frac Mr+1.
    \]
    For the remaining indices, consecutive values of $-\beta E_j$ differ by $-\beta E_j/q>(\log r)/(2q)>1/3$.
    Summing the geometric tails on either side of $\log r$ therefore gives
    \[
        \sum_{j:-\beta E_j>(\log r)/2}\exp\bigl(-\lvert\beta E_j+\log r\rvert\bigr)
        \leq2\sum_{k=0}^{\infty}\e^{-k/3}<8.
    \]
    Combining the bounds for the two ranges of indices proves the desired estimate:
    \[
        \sum_{j=1}^M\DKL{\nu^1_{j,\beta}}{\nu^0_{j,\beta}}
        <192\eps^2\left(\frac Mr+9\right)
        \leq2048\eps^2\left(1+\frac Mr\right).\qedhere
    \]
\end{proof}

\section*{AI Disclosure}
LLMs were used to refine the proofs and polish the language. 
The upper-bound arguments were developed by the authors, while the lower-bound constructions were developed with AI assistance. The authors take full responsibility for all content.

\section*{Acknowledgments}
We thank Weiming Feng for kindly allowing us to use his idea of analyzing the whole product.

\printbibliography

\end{document}